\newtheorem{observation}[theorem]{Observation}
\newcommand{\norm}[1]{\left\Vert#1\right\Vert}
\newcommand{\Real}{\mathbb R}
\newcommand{\eps}{\varepsilon}
\newcommand{\Frechet}{Fr\'echet\xspace}
\newcommand{\opt}{\mathrm{opt}}
\newcommand{\apx}{\mathrm{apx}}
\newcommand{\crit}{\mathrm{crit}}
\newcommand{\df}{\delta_\mathcal{F}}
\newcommand{\dF}{\delta_\mathrm{F}}
\newcommand{\tikzsetnextfilename}[1] { }
\renewcommand{\todo}[2][]{\tikzexternaldisable\@todo[#1]{#2}\tikzexternalenable}
\newcommand{\getxy}[3]{%
  \tikz@scan@one@point\pgfutil@firstofone#1\relax
  \edef#2{\the\pgf@x}%
  \edef#3{\the\pgf@y}%
}
		\pgfmathsetmacro \pgfmathx {abs(#1)}%
		\pgfmathsetmacro \pgfmathy {abs(#2)}%
\newcommand{\tikzdefines}[0] {
    \tikzstyle{cblack}=[circle, draw, thick, solid, fill=black, scale=.15]
    \tikzstyle{cblue}=[circle, draw, solid, black, thin, fill=cyan!20, scale=.4]
    \tikzstyle{cred}=[red, circle, draw, solid, fill=red!20, scale=.4]
    \tikzstyle{cgreen}=[circle, draw, solid, fill=green!60, scale=.4]
}
\newcommand{\pathborder}[2]{
    [
        create hullcoords/.code={
            \global\edef\namelist{#1}
            \foreach [count=\counter] \nodename in \namelist {
                \global\edef\numberofnodes{\counter}
                \coordinate (hullcoord\counter) at (\nodename);
            }
            \pgfmathtruncatemacro \numberofnodes {\numberofnodes - 1}
            \foreach [count=\counter] \nodenum in {\numberofnodes,...,1} {
                \pgfmathtruncatemacro \next {\counter + \numberofnodes + 1}
                \coordinate (hullcoord\next) at (hullcoord\nodenum);
            }
            \pgfmathtruncatemacro \numberofnodes {\numberofnodes * 2 + 1}
            \coordinate (hullcoord0) at (hullcoord2);
            \pgfmathtruncatemacro \lastnumber {\numberofnodes+1}
            \coordinate (hullcoord\lastnumber) at (hullcoord2);
        },
        create hullcoords
    ]
    ($(hullcoord1)!#2!90:(hullcoord0)$) node[circle] {}
    \foreach [
        evaluate=\currentnode as \prevnode using \currentnode-1,
        evaluate=\currentnode as \nextnode using \currentnode+1
    ] \currentnode in {2,...,\numberofnodes} {
        let
            \p1 = ($(hullcoord\currentnode) - (hullcoord\prevnode)$),
            \n1 = {atan3(\x1,\y1) + 90pt},
            \p2 = ($(hullcoord\nextnode) - (hullcoord\currentnode)$),
            \n2 = {atan3(\x2,\y2) + 90pt},
            \n3 = {Mod(\n2-\n1,360) - 360},
            \n4 = {cos(.5*(\n3+360))},
            \n5 = {\n1+.5*\n3}
        in
        {
            \ifdim \n3 < -180.05pt {
                -- ($(hullcoord\currentnode) + (\n5:-1/\n4*#2)$)
            } \else {
                -- ($(hullcoord\currentnode)!#2!-90:(hullcoord\prevnode)$)
                arc [start angle=\n1, delta angle=\n3, radius=#2]
            } \fi
        }
    }
}
\newcommand*\latinabbrev[1]{%
    \@ifnextchar{.}%
        {#1}%
        {#1.\@\xspace}%
}
\def\etal{\latinabbrev{et al}}
\title{Finding a Curve in a Point Set}
\institute{
Dept. of Computer \& Info. Sci. \& Eng.\\
University of Florida, Gainesville, FL, USA\\
{\tt \{accisano, ungor\}@cise.ufl.edu}
}
\author{Paul Accisano \and Alper {\"{U}ng\"{o}r}}
\begin{document}

\maketitle

\begin{abstract}
    Let $P$ be a polygonal curve in $\mathbb{R}^D$ of length $n$, and $S$ be a point set of size $k$.  The Curve/Point Set Matching problem consists of finding a polygonal curve $Q$ on $S$ such that its \Frechet distance from $P$ is less than a given $\eps$.  In this paper, we consider this problem with the added freedom to transform the input curve $P$ by translating it, rotating it, or applying an arbitrary affine transform.  We present exact and approximation algorithms for several variations of this problem.
\end{abstract}

\section{Introduction}

Matching a curve and a set of points is a typical geometric problem that arises in many science and engineering fields, such as computer aided design, computer graphics and vision, and protein structure prediction \cite{Brakatsoulas05,Jiang08}. In these applications, data is typically gathered as a point set through a scanner, and the goal is often to find certain objects, described as polygonal curves, in the scene.  In order to perform a matching, one needs a similarity metric between geometric constructs.  In this paper, we study the problem of curve and point set matching, using the \Frechet distance as the similarity metric.

\begin{table}[b]
    \centering
    \begin{tabular}{l@{\hspace{1em}}l@{\hspace{1em}}|@{\hspace{3em}}cl@{\hspace{3em}}c@{\hspace{-1em}}c}
        \toprule
                &               &   \multicolumn{2}{l}{Discrete} & Continuous\\
        \midrule
        Subset  & Unique        & NP-C & \cite{Wylie12}     & NP-C & \cite{Accisano14} \\
                & Non-Unique    & P    & \cite{Wylie12}     & P & \cite{Maheshwari11} \\
        All-Points & Unique        & NP-C & \cite{Wylie12}     & NP-C & \cite{Accisano12}\\
                & Non-Unique    & P    & \cite{Wylie12}     & NP-C & \cite{Accisano12} \\
        \bottomrule
    \end{tabular}%
    \bigskip
    \caption{Eight versions of the CPSM problem and their complexity classes. \vspace{-10pt}}
    \label{tab:results}%
\end{table}%

Given a point set and a polygonal curve, the goal is to connect the points into a new polygonal curve that is similar to the given curve.  An important factor is whether or not the input curve is allowed to be translated or transformed.  The problem in which the curve is fixed in place has been well-studied in the literature \cite{Accisano12,Maheshwari11,Wylie12}, and we refer to it in this paper as the \textbf{\emph{Curve/Point Set Matching (CPSM)}} problem.  Formally, given a polygonal curve $P$ of length $n$, a point set $S$ of size $k$, and a real number $\eps > 0$, determine whether there exists a polygonal curve $Q$ on a subset of the points of $S$ such that $\df(P, Q) \le \eps$.

Eight versions of the original CPSM problem can be classified based on whether the use of all points is enforced, whether points are allowed to be visited more than once, and whether the \Frechet distance metric used is discrete or continuous.  Table \ref{tab:results} summarizes the versions and their known complexity classes.

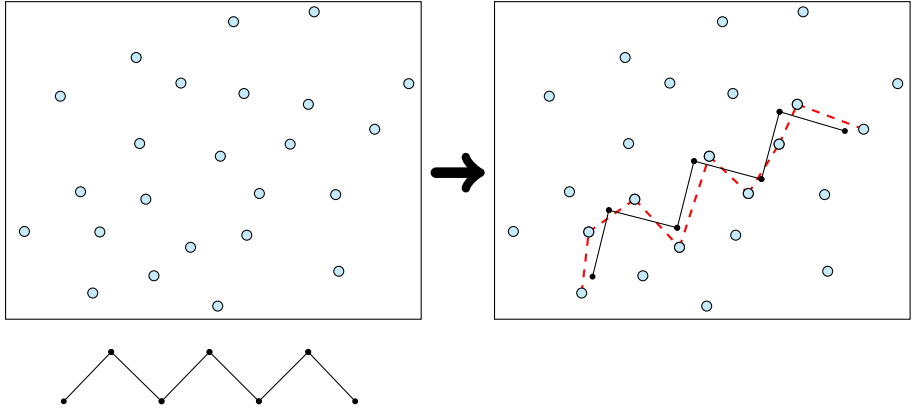
\begin{figure}[t]
    \centering
    \makebox[\textwidth][c]{
    \tikzsetnextfilename{fig_example}
\begin{tikzpicture}[scale=.65]
    \tikzdefines
    
    \newcommand{\vertP}{(6.15,0.32), (5.19,1.33), (4.18,0.32), (3.17,1.33), (2.19,0.32), (1.16,1.33), (0.19,0.32)}
    \newcommand{\vertQ}{(-0.17,0.14), (0.58,1.15), (1.73,1.26), (2.03,-0.05), (3.49,1.26), (3.8,0.2), (4.85,0.76), (5.58,1.28), (6.5,0.16), (1.09,-0.18), (0.65,2.06), (2.19,2.31), (3.15,-0.41), (1.91,-1.37), (7.57,0.62), (6.63,2.86), (4.55,2.13), (5.14,-0.6), (-0.75,1.93), (1.27,3.96), (3.01,3.87), (5.1,3.51), (4.41,-1.99), (3.54,2.96)}

    \def\qn{24};

    \begin{scope}[shift={(-10,0)}]
        \begin{scope}[rotate=30]
            \foreach [count=\x] \pt in \vertQ {
                \coordinate [at=\pt, name=q\x];
            }
            \foreach \x in {1, ..., \qn} {
                \draw (q\x) node[cblue] {};
            }
        \end{scope}
        
       \begin{scope}[shift={(-1, -2.5)}]
            \foreach [count=\x] \pt in \vertP {
                \coordinate [at=\pt, name=p\x];
            }
            \foreach \x  [evaluate=\x as \y using (\x-1)] in {2, ..., 7} {
                \draw (p\y) node[cblack] {} --(p\x) node[cblack] {};
            }
        \end{scope}
        \draw[line width=4pt,arrows=->,cap=round] (6.8,2.5) -- (7.8,2.5); 
        \draw (-2,6) rectangle (6.5, -.5);
    \end{scope}

    \begin{scope}[rotate=30]
        \foreach [count=\x] \pt in \vertQ {
            \coordinate [at=\pt, name=q\x];
        }
        \foreach \x  [evaluate=\x as \y using (\x-1)] in {2, ..., 9} {
            \draw[thick, dashed, red] (q\y) node[cblue] {} --(q\x) node[cblue] {};
        }
        \foreach \x in {10, ..., \qn} {
            \draw (q\x) node[cblue] {};
        }

        \foreach [count=\x] \pt in \vertP {
            \coordinate [at=\pt, name=p\x];
        }
        \foreach \x  [evaluate=\x as \y using (\x-1)] in {2, ..., 7} {
            \draw (p\y) node[cblack] {} --(p\x) node[cblack] {};
        }
    \end{scope}
    \draw (-2,6) rectangle (6.5, -.5);
\end{tikzpicture} 
    }
    \caption{Example TCPSM instance and solution.}
    \label{fig:example}
\end{figure}

\Frechet distance is a powerful and useful metric, but it is very sensitive to positional and rotational differences.  If the goal is to locate a curve in a point set that is similar to a given curve, many applications would want the shape of the curve to be the only relevant factor, not its position or orientation.  One way to achieve this is to allow the curve to be transformed by an affine transformation from a user-specified set.  Indeed, in the literature, the problem of matching two curves under a specified set of affine transformations has been well-studied \cite{Wenk02,Jiang08}.  In this paper, we introduce the \textbf{\emph{Transformed Curve/Point Set Matching (TCPSM)}} problem, in which the goal is to find the transformation of $P$ (limited to a user-specified set) that puts it closest to some curve whose vertices are in $S$.  Figure \ref{fig:example} shows an example instance.  We note that, in this paper, we focus on the Non-unique versions only, and we use the convention that all the TCPSM problems discussed henceforth refer to the Non-unique version.

\textbf{Our results.}  We present an algorithm that makes use of the results in \cite{Maheshwari11} and \cite{Wenk02} to solve the Continuous Non-Unique TCPSM.  The algorithm runs in $O((nk)^{2d+1}k)$ time for the decision version of the problem, where $d$ is the number of degrees of freedom in the affine transform matrix.  Similarly, we show how the Discrete versions are solvable in a similar fashion, resulting in $O((nk)^{d+1})$ time algorithms for both the Subset and All-Points versions.  Finally, we give a new algorithm designed from scratch to solve the special case of translations in $\Real^2$ for the Discrete versions.  Our algorithm runs in $O(n^2k^2\log(nk))$ time, which is an improvement over the general case.  Interestingly, this is faster than the best known algorithm for finding the optimal translation when both curves are given.  The optimization version of all the problems discussed can be solved with an additional $O(\log(nk))$ factor via parametric search.

\section{Preliminaries}
Below, we present the notation that will be used throughout the paper, some of which is similar to the notation used by earlier work \cite{Accisano12, Alt03,Maheshwari11}.  More will be introduced later as needed.  Given two curves $P, Q : [0, 1] \rightarrow \mathbb{R}^d$, the \emph{\Frechet distance} between $P$ and $Q$ is defined as 
$$
\df(P, Q) = \inf_{\sigma, \tau} \max_{t\in[0,1]} \norm{P(\sigma(t)), Q(\tau(t))}$$
where $\sigma, \tau : [0, 1] \rightarrow [0, 1]$ range over all continuous non-decreasing surjective functions \cite{Ewing85}.  Deciding whether two curves have \Frechet distance less than a given $\eps$ can be done in $O(nm)$ time, and finding the actual \Frechet distance can be determined in $O(nm \log(nm))$ time by applying parametric search \cite{Alt95}.  The Continuous Subset CPSM, which uses the continuous \Frechet distance, is solvable in $O(nk^2)$ time \cite{Maheshwari11}.

Discrete \Frechet distance is a variation of the standard \Frechet distance that only takes into account distance at the curve vertices \cite{Eiter94}.  For two curves $P$ and $Q$ of lengths $n$ and $m$ respectively, a \emph{paired walk} or \emph{coupling sequence} is a pair of integer sequences $(a_1, b_1), \dots, (a_k, b_k)$, $k \ge \max(n, m)$, with the properties that $(a_1, b_1) = (1, 1), (a_k, a_k) = (n, m)$, and for all $i$, $(a_{i+1}, b_{i+1}) \in \{(a_{i}+1, b_i), (a_i, b_{i}+1), (a_{i}+1, b_{i}+1)\}$.  Let $W$ be the set of all paired walks for $P$ and $Q$.  Then the \emph{discrete \Frechet distance} can be defined as:
$$
\dF = \min_{(a, b) \in W} \max_i \norm{P_{a_i}, Q_{b_i}}
$$
Note that $\df(P, Q) \le \dF(P, Q)$.  The discrete \Frechet distance can be computed directly, without the need for parametric search, in $O(nm)$ time via a dynamic programming algorithm \cite{Eiter94}.  Recently, Agarwal \etal \cite{Agarwal12} presented an algorithm that finds the discrete \Frechet distance in $O(nm \log \log(nm) / \log(nm))$ time, breaking the quadratic barrier for this problem.  The best known algorithm for the discrete CPSM is still $O(nk)$, for both decision and optimization versions \cite{Wylie12}.

A polygonal curve $P$ is defined by a set of vertices $P_0, \dots, P_n$.  We use $\overline{P_i}$ to refer to the segment of $P$ between $P_{i-1}$ and $P_i$, and we use $V(P)$ to denote the set of vertices of $P$.  Let $P$ and $Q$ be curves and $\eps = \df(P, Q)$. A pair of points $(p, q) \in (P \times Q)$, residing in segments $\overline{P_i}$ and $\overline{Q_j}$ respectively, is said to be \emph{feasible} if the subcurves $(P_0, \dots, P_{i-1}, p)$ and $(Q_0, \dots, Q_{j-1}, q)$ have \Frechet distance at most $\eps$.  If $(p, q) \in (\overline{P_i} \times Q)$ is feasible, we say that $q$ is \emph{visited at} segment $i$ of $P$.  For every point $q \in Q$, there must be at least one feasible pair $(p, q)$, and thus every point in $Q$ is visited in at least one segment of $P$.

\section{Previous Work}
\subsection{Maheshwari's Algorithm}
In \cite{Maheshwari11}, Maheshwari \etal gave a dynamic programming algorithm to solve the Continuous Subset CPSM in polynomial time.  The algorithm relies on a previous result by Alt \etal \cite{Alt03} to compute the reachability information for every pair of points in $S$ with respect to $P$.  Starting from the first vertex of $P$, the algorithm works by computing which points are reachable at which segments, propagating this reachability information through the curve.  Once the final vertex of $P$ is reached, if any points in $S$ are still reachable, the algorithm returns true.  It runs in $O(nk^2)$ time.  We need not modify Maheshwari's algorithm for our purposes; we simply call it as a subroutine.

\subsection{Wenk's Algorithm}
Wenk \cite{Wenk02} looked at a problem similar to the TCPSM.  In that problem, instead of matching a curve to a point set, they match a curve to another curve, allowing one to be transformed in order to find the optimal \Frechet distance.  As we will make use of this algorithm, we give a full overview in this section.

Let $\mathcal{T}$ be a rationally parameterized\footnote{See \cite{Wenk02} for a full definition of ``rationally parameterized.''  The set of rationally parameterized affine transforms includes the set of all commonly used transformation types, such as translations, rotations, similarities, and arbitrary affine transforms.} subset of the set of affine transformations operating in $\Real^D$.  Each transformation in $\mathcal{T}$ can be represented by a $D \times D$ linear transformation matrix $A$ and a translation vector $t \in \Real^D$.  Thus, the number of degrees of freedom $d$ is at most $D^2 + D$, and the parameter space of  $\mathcal{T}$ can be represented by $\Real^d$.  In \cite{Wenk02}, Wenk considers the problem of finding the $x \in \Real^d$ that minimizes $\df(A_x P + t_x, Q)$, where $P$ and $Q$ are polygonal curves.  In keeping with many other \Frechet results, the strategy employed is to first solve the decision version of the problem, which asks if there exists an $x \in \Real^d$ for which $\df(A_x P + t_x, Q) \le \eps$, and then use parametric search to find the minimum.  For brevity, we let $\tau_x(P) = A_x P + t_x$.

\begin{lemma}\emph{\hspace*{-.07in}\cite{Wenk02}} \label{lem:continuous}
    For any two polygonal curves $P$ and $Q$, $\df(\tau_x(P), Q)$ is continuous as a function of $x \in \Real^d$.
\end{lemma}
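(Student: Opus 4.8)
The plan is to reduce the statement to the continuity, in the \Frechet metric, of the map $x \mapsto \tau_x(P)$, and then to establish that continuity by a direct pointwise estimate. The \Frechet distance satisfies the triangle inequality $\df(P, R) \le \df(P, Q) + \df(Q, R)$ (one composes the optimal reparameterizations of each pair), and this immediately yields the reverse inequality
\[
\abs{\df(\tau_x(P), Q) - \df(\tau_y(P), Q)} \le \df(\tau_x(P), \tau_y(P)).
\]
Hence it suffices to show that the right-hand side tends to $0$ as $y \to x$, since this forces $x \mapsto \df(\tau_x(P), Q)$ to be continuous at every $x$.

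To bound $\df(\tau_x(P), \tau_y(P))$, I would not optimize over reparameterizations at all. Because both curves are images of the same $P$, the identity coupling (pairing $\tau_x(P(s))$ with $\tau_y(P(s))$ for each parameter value $s \in [0,1]$) is an admissible choice of $\sigma = \tau = \mathrm{id}$ in the definition of $\df$, so
\[
\df(\tau_x(P), \tau_y(P)) \le \max_{s \in [0,1]} \norm{(A_x - A_y) P(s) + (t_x - t_y)} \le \norm{A_x - A_y}\, R + \norm{t_x - t_y},
\]
where $\norm{A_x - A_y}$ is the induced operator norm and $R = \max_{s \in [0,1]} \norm{P(s)}$. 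This $R$ is finite because the image of the polygonal curve $P$ is the continuous image of the compact set $[0,1]$, hence bounded, so the continuous map $s \mapsto \norm{P(s)}$ attains a finite maximum.

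It then remains to observe that $\norm{A_x - A_y} \to 0$ and $\norm{t_x - t_y} \to 0$ as $y \to x$. This follows because, $\mathcal{T}$ being rationally parameterized, each of the finitely many entries of $A_x$ and of $t_x$ is a rational, hence continuous, function of $x$ on the parameter domain; therefore the matrix $A_x$ and the vector $t_x$ depend continuously on $x$. Combining the three displays gives $\df(\tau_x(P), \tau_y(P)) \to 0$ as $y \to x$, which is exactly what the reduction required, and the lemma follows.

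The main obstacle I expect is not any single inequality but making the two analytic inputs rigorous: confirming that the image of $P$ is genuinely bounded (immediate here, since $P$ has finitely many vertices) and, more delicately, that the rational parameterization is continuous on the relevant domain. The latter requires that one stays away from the poles of the defining rational functions; this is precisely what the notion of a rationally parameterized family (as in \cite{Wenk02}) is set up to guarantee, so I would lean on that definition rather than re-derive continuity for each transformation type.
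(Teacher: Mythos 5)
Your proof is correct, but note that the paper itself contains no proof of this lemma to compare against: it is stated as a quoted result from Wenk \cite{Wenk02}. Your argument --- reducing continuity to the estimate $\abs{\df(\tau_x(P),Q)-\df(\tau_y(P),Q)} \le \df(\tau_x(P),\tau_y(P))$ via the triangle inequality, then bounding the right-hand side with the identity coupling by $\norm{A_x-A_y}\,R + \norm{t_x-t_y}$ and invoking continuity of the parameterization --- is the standard route, and it is the same in spirit as the one-line justification the paper gives for the discrete analogue (Lemma \ref{lem:dcontinuous}), where continuity follows because the distance is a finite min/max of functions continuous in $x$. Two remarks. First, your parenthetical justification of the triangle inequality (``compose the optimal reparameterizations'') is glib: for the continuous \Frechet distance the infimum need not be attained, and the two reparameterizations of the middle curve need not agree, so the rigorous proof requires near-optimal reparameterizations and an approximation argument; since the triangle inequality is a well-known property of $\df$, this is a quotable fact rather than a gap, but as written it is not a proof of it. Second, your argument actually establishes something slightly more general than what the rational-parameterization hypothesis is for: only continuity of $x \mapsto (A_x, t_x)$ is used, and rationality plays no role in this lemma --- it is needed in Wenk's framework solely to make the critical transformation sets semi-algebraic (Lemma \ref{lem:semialg}), so your concluding caution about poles is the right thing to flag but is the only point of contact with rationality.
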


\begin{corollary}\emph{\hspace*{-.07in}\cite{Wenk02}}
    For a given $\eps > 0$, if there exists some $x \in \Real^d$ for which $\df(\tau_x(P), Q) < \eps$, then there exists some $x \in \Real^d$ for which $\df(\tau_x(P), Q) = \eps$
\end{corollary}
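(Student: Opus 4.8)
The plan is to treat $f(x) := \df(\tau_x(P), Q)$ as a continuous scalar function on the connected parameter space $\Real^d$ and to finish with the Intermediate Value Theorem. Lemma~\ref{lem:continuous} already supplies the continuity of $f$, so the whole argument reduces to exhibiting one point at which $f$ lies strictly below $\eps$ and one point at which it lies at or above $\eps$, and then connecting the two.

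First, by hypothesis there is a point $x_0 \in \Real^d$ with $f(x_0) < \eps$. Next I would produce a point $x_1$ with $f(x_1) \ge \eps$. The idea is that the \Frechet distance between $\tau_x(P)$ and the fixed curve $Q$ grows without bound as the transformation pushes $\tau_x(P)$ arbitrarily far from $Q$: by increasing the translational component of $\tau_x$, which the rational parameterization of $\mathcal{T}$ permits, every vertex of $\tau_x(P)$ can be driven to lie at distance more than $\eps$ from every point of $Q$, which forces $f(x_1) > \eps$ for a suitable parameter $x_1$.

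Finally, since $\Real^d$ is path-connected, I would take any continuous path $\gamma : [0,1] \to \Real^d$ with $\gamma(0) = x_0$ and $\gamma(1) = x_1$, for example the straight-line segment $\gamma(s) = (1-s)x_0 + s\,x_1$. The composition $f \circ \gamma : [0,1] \to \Real$ is continuous by Lemma~\ref{lem:continuous} and satisfies $(f\circ\gamma)(0) < \eps \le (f\circ\gamma)(1)$. By the Intermediate Value Theorem there is an $s^\star \in (0,1]$ with $(f\circ\gamma)(s^\star) = \eps$, and the parameter $x = \gamma(s^\star)$ is then the desired point.

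The step I expect to be the crux is establishing the existence of $x_1$ with $f(x_1) \ge \eps$, since the bare continuity statement says nothing about the range of $f$. This is where the structure of the admissible transformation family $\mathcal{T}$ must enter: one needs $\mathcal{T}$ to be rich enough (e.g. to contain translations of unbounded magnitude, as the commonly used families in \cite{Wenk02} do) that $f$ is unbounded above. Once that is in hand, the remaining IVT argument is routine.
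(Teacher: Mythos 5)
Your proof is correct and is essentially the argument behind the paper's statement: the paper gives no proof of its own (the corollary is cited from \cite{Wenk02} and placed as an immediate consequence of Lemma~\ref{lem:continuous}), and the intended reasoning is exactly yours --- continuity of $x \mapsto \df(\tau_x(P), Q)$ plus the Intermediate Value Theorem along a path in the connected parameter space $\Real^d$, with a far-away transformation supplying a value at or above $\eps$. The crux you flag is genuine and your explicit richness assumption on $\mathcal{T}$ is exactly what any proof must use: for the transformation classes actually treated in \cite{Wenk02} (translations, rigid motions, similarities, affine maps) the translational component is unbounded and your $x_1$ exists, whereas for an impoverished rationally parameterized family (e.g., rotations about a fixed center, with $P$, $Q$ tiny and $\eps$ huge) the function $\df(\tau_x(P),Q)$ is bounded and the corollary as literally stated is false, so this hypothesis cannot be dispensed with.
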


\begin{lemma}\emph{\hspace*{-.07in}\cite{Wenk02}} \label{lem:config}
    If $\df(P, Q) = \eps$, then either:
    \begin{itemize}
    \item there exists a vertex $x$ from one curve and a segment $\overline{Z}$ from the other curve for which $\min_{z \in \overline{Z}} \norm{x, z} = \eps$,
    \item or there exist two vertices $x$ and $y$ from one curve and a segment $\overline{Z}$ from the other curve for which $\norm{x, z} = \norm{y, z} = \eps$ for some $z \in \overline{Z}$.
    \end{itemize}
\end{lemma}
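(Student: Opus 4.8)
The plan is to analyze the \emph{free-space diagram} of $P$ and $Q$, the standard tool behind the $O(nm)$ decision procedure for the \Frechet distance, since it makes the critical events visible. Parameterize $P$ over $[0,n]$ and $Q$ over $[0,m]$ with vertices at the integers, and set $F_\eps = \{(s,t) : \norm{P(s), Q(t)} \le \eps\}$. Recall that $\df(P,Q) \le \eps$ iff $F_\eps$ contains a path from $(0,0)$ to $(n,m)$ that is monotone in both coordinates. The diagram decomposes into cells $[i-1,i]\times[j-1,j]$, one per pair of edges $(\overline{P_i},\overline{Q_j})$; since $P(s)-Q(t)$ is affine on a cell, its norm is convex there, so $F_\eps$ meets each cell in a convex region and meets each cell boundary in an interval. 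A vertical boundary $\{s=i\}$ records the sublevel set $\{t : \norm{P_i, Q(t)} \le \eps\}$, i.e. the \emph{vertex-to-edge} distance from the vertex $P_i$ to an edge of $Q$; horizontal boundaries record the symmetric quantity. Because $\df(P,Q)=\eps$ exactly, a monotone path exists in $F_\eps$ but in no $F_{\eps'}$ with $\eps'<\eps$; by compactness of $F_\eps$ and of the set of monotone paths, the bottleneck is attained, and some monotone path $\pi$ is \emph{critically tight}, meaning it cannot be deformed to push its maximum distance below $\eps$.

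Next I would locate a \emph{bottleneck point} $(s^\star, t^\star)$ on $\pi$ with $\norm{P(s^\star), Q(t^\star)} = \eps$ and classify it by where $P(s^\star)$ and $Q(t^\star)$ sit. If both lie in the interiors of their edges $\overline{P_i},\overline{Q_j}$, then $(s^\star,t^\star)$ is interior to cell $(i,j)$; since the feasible region in the cell is convex, the monotone straight segment joining the (comparable) points where $\pi$ enters and leaves the cell stays feasible, and if those crossing points are not themselves tight it lies strictly below the level set $\{\norm{}=\eps\}$, contradicting tightness. Hence the tight contact may be taken on a cell boundary, so at least one of $P(s^\star),Q(t^\star)$ is a vertex. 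Say $P(s^\star)=P_i$ and $z:=Q(t^\star)$ lies on edge $\overline{Z}=\overline{Q_j}$. If the matching can slide $z$ along $\overline{Z}$, in the direction permitted by monotonicity, so as to decrease $\norm{P_i,z}$, tightness is again contradicted; therefore sliding is either \emph{useless}, so $z$ realizes $\min_{z'\in\overline{Z}}\norm{P_i,z'}=\eps$, which is exactly the first case, or \emph{blocked}, so monotonicity forbids moving $z$ in the decreasing direction. A block can only be forced by a second, simultaneously tight contact on $\pi$ that pins the shared coordinate; tracing that contact to its own vertex $y$ yields a point $z\in\overline{Z}$ with $\norm{P_i,z}=\norm{y,z}=\eps$, the second case.

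The step I expect to be the crux is the blocking argument that produces the second case. Turning ``$z$ cannot be slid in the distance-decreasing direction'' into ``there is a concrete second vertex $y$ with $\norm{y,z}=\eps$'' requires a careful propagation through the free-space diagram: the obstruction to sliding corresponds, in Alt--Godau terms, to the endpoints of reachable free intervals on two cell boundaries becoming aligned, and one must verify this alignment is witnessed by a common point $z$ equidistant at $\eps$ from two vertices rather than by a softer constraint. Comparable care is needed to fold in the degenerate bottlenecks where both $P(s^\star)$ and $Q(t^\star)$ are vertices, including the forced contacts at the endpoints $(0,0)$ and $(n,m)$: such a vertex-vertex contact at distance $\eps$ is a limiting instance of the first case, with the perpendicular foot degenerating to an endpoint of $\overline{Z}$, so it is subsumed once the sliding analysis permits motion toward either end of $\overline{Z}$. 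The remaining ingredients---per-cell convexity and the routine reparametrization estimates---follow the classical Alt--Godau critical-value classification, of which the two listed configurations are precisely the non-endpoint event types.
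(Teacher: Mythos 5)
The paper never actually proves this lemma; it is imported verbatim from \cite{Wenk02}, so your attempt can only be measured against the standard Alt--Godau/Wenk critical-value analysis, which is indeed the skeleton you chose (free-space diagram, per-cell convexity, classification of tight contacts). The genuine gap is in how you dispose of vertex--vertex contacts, specifically the forced contacts at $(0,0)$ and $(n,m)$. Your dichotomy for a tight contact $(P_i,z)$ with $z\in\overline{Q_j}$ --- sliding is either \emph{useless} (hence a minimum, bullet 1) or \emph{blocked by a second tight contact} (hence bullet 2) --- omits a third possibility: the slide is blocked by the boundary of the parameter domain itself, since every monotone path is pinned at $(0,0)$ and $(n,m)$. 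At such a forced contact neither bullet need hold, and your claim that this case is ``subsumed'' by bullet 1 ``with the perpendicular foot degenerating to an endpoint of $\overline{Z}$'' is exactly where the proof breaks: the perpendicular foot can lie in the \emph{interior} of the edge, at distance strictly less than $\eps$, while the endpoint contact remains forced. Concretely, take $P$ from $(0,0)$ to $(\delta,1)$ and $Q$ from $(\delta,0)$ to $(0,1)$, two crossing segments, with any $\delta\in(0,1)$, $\delta\neq 1/\sqrt{3}$. Every reparameterization matches $P_0$ with $Q_0$ and $P_1$ with $Q_1$, so $\df(P,Q)\ge\delta$, and the linear matching achieves $\delta$; thus $\df(P,Q)=\delta$ exactly. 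Yet all four vertex-to-opposite-segment minimum distances equal $\delta/\sqrt{1+\delta^2}<\delta$ (the foot is interior), so bullet 1 fails; and the unique point of one segment equidistant from the two vertices of the other curve is its midpoint, at distance $\frac{1}{2}\sqrt{1+\delta^2}\neq\delta$, so bullet 2 --- read with $x\neq y$, as the paper's subsequent definition of $T^\eps_\crit$ requires, since its $i=j$ case is the minimum condition --- fails as well. The endpoint events $\norm{P_0,Q_0}=\eps$ and $\norm{P_n,Q_m}=\eps$ are a genuinely separate third case (type (a) in the Alt--Godau critical-value list, present in Wenk's original formulation), and a correct proof must surface them rather than absorb them; incidentally, this also shows the lemma as transcribed in this paper is incomplete.

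A secondary weakness: the step you yourself flag as the crux --- converting ``$z$ cannot be slid in the distance-decreasing direction'' into the existence of a concrete second vertex $y$ with $\norm{y,z}=\eps$ --- is asserted, not proved, and the single-path deformation framework makes it hard to prove, because pushing one tight contact can create or tighten others elsewhere along the path. The standard rigorous route is global and contrapositive: show that if $\eps$ coincides with none of the critical values (endpoint distances, vertex--edge minima, two-vertex equidistance events), then the combinatorial structure of the reachable free space is unchanged for all $\eps'$ in a neighborhood of $\eps$, so $\df(P,Q)\le\eps$ would force $\df(P,Q)\le\eps'$ for some $\eps'<\eps$, contradicting $\df(P,Q)=\eps$. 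Recasting your argument in that neighborhood-stability form would both close the crux and automatically bring the endpoint case into view, since the condition $(0,0)\in F_\eps$ is itself one of the finitely many conditions whose truth can change as $\eps$ decreases.
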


The above lemma shows that, in order to find a transformation of $P$ which puts its \Frechet distance from $Q$ at exactly $\eps$, we need only consider those transformations which cause one of these two conditions to arise.  Both involve a segment and at most two vertices.  This leads to the introduction of configurations.

\begin{definition}\emph{\hspace*{-.07in}\cite{Wenk02}}
A \emph{configuration} is a triple $(x, y, \overline{Z})$, where $x$ and $y$ are vertices from one curve (possibly the same vertex) and $\overline{Z}$ is a segment from the other.
\end{definition}

\begin{definition}\emph{\hspace*{-.07in}\cite{Wenk02}}
For a given configuration $c = (P_i, P_j, \overline{Q_k})$, the set of \emph{critical transformations} $T^\eps_\crit(c)$ is defined as:
$$
T^\eps_\crit(c) =
\left\{
  \begin{array}{lr}
    \{x \in \Real^d \mid \min_{z \in \overline{Q_k}} \norm{\tau_x(P_i), z} = \eps \}& : i = j\\
    \{x \in \Real^d \mid \exists z \in \overline{Q_k} \, \norm{\tau_x(P_i), z} = \norm{\tau_x(P_j), z} = \eps \}& : i \ne j
  \end{array}
\right.
$$

Symmetrically, if $c = (Q_i, Q_j, \overline{P_k})$,
$$
T^\eps_\crit(c) =
\left\{
  \begin{array}{lr}
    \{x \in \Real^d \mid \min_{z \in \tau_x(\overline{P_k})} \norm{Q_i, z} = \eps \}& : i = j\\
    \{x \in \Real^d \mid \exists z \in \tau_x(\overline{P_k}) \, \norm{Q_i, z} = \norm{Q_j, z} = \eps \}& : i \ne j
  \end{array}
\right.
$$
A transformation $\tau_x$ is said to be \emph{critical} if it is critical for some configuration $c$.  The arrangement in $\Real^d$ of all critical transformations is denoted by $A^\eps_\crit$.
\end{definition}


\begin{lemma}\emph{\hspace*{-.07in}\cite{Wenk02}} \label{lem:face}
If there exists a transformation $\tau_x$ such that $\df(\tau_x(P), Q) < \eps$ then there exists some face $F \in A^\eps_\crit$ such that $\df(\tau_y(P), Q) \le \eps$ for all $y \in F$.
\end{lemma}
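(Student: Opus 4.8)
The plan is to study the continuous scalar function $g(x) = \df(\tau_x(P), Q)$ over the parameter space $\Real^d$ and to exploit the fact that its level set at height $\eps$ is confined to the critical arrangement. First I would record the two structural ingredients I intend to use. By Lemma~\ref{lem:continuous}, $g$ is continuous on $\Real^d$. By the contrapositive of Lemma~\ref{lem:config}, whenever $g(x) = \eps$ the transformed curve $\tau_x(P)$ together with $Q$ must realize one of the two geometric coincidences described there, so $\tau_x$ is critical for some configuration and hence $x \in A^\eps_\crit$. Stated for the complement, this says: at every $x \notin A^\eps_\crit$ we have $g(x) \neq \eps$.

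Next I would invoke the cell structure of the arrangement. Since there are only finitely many configurations $(x, y, \overline{Z})$, the arrangement $A^\eps_\crit$ is a finite union of the sets $T^\eps_\crit(c)$, each of which is a lower-dimensional algebraic subset of $\Real^d$; consequently the complement of $A^\eps_\crit$ decomposes into finitely many connected open full-dimensional faces. Fix such a face $F$. On $F$ the function $g$ is continuous and, by the observation above, never takes the value $\eps$. Because $F$ is connected, $g - \eps$ cannot change sign on it: a sign change would, by the intermediate-value property, force $g = \eps$ at some interior point, contradicting that $F$ avoids $A^\eps_\crit$. Hence on each full-dimensional face we have the dichotomy that either $g < \eps$ throughout $F$ or $g > \eps$ throughout $F$.

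Finally I would locate a face of the first kind. By hypothesis the set $U = \set{x \in \Real^d : g(x) < \eps}$ is nonempty, and by continuity of $g$ it is open, so it contains a $d$-dimensional ball. Since $A^\eps_\crit$ is a finite union of lower-dimensional surfaces, it has measure zero and is nowhere dense, so this ball must contain a point $x'$ lying off the arrangement. That $x'$ belongs to the interior of some full-dimensional face $F$, and as $g(x') < \eps$, the dichotomy of the previous paragraph forces $g < \eps$ on all of $F$. Thus $\df(\tau_y(P), Q) < \eps$, and in particular $\le \eps$, for every $y \in F$, which is the desired conclusion; by continuity the weaker bound $\le \eps$ in fact extends to the closure $\overline{F}$ as well.

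The step I expect to be the main obstacle is the passage from ``the level set $\set{g = \eps}$ lies inside $A^\eps_\crit$'' to ``$g$ has constant sign on each cell.'' Making this rigorous requires that the faces of $A^\eps_\crit$ really are connected and that the arrangement is nowhere dense, which in turn rests on the finiteness of the configuration set and on each $T^\eps_\crit(c)$ being a lower-dimensional set in $\Real^d$. This is precisely where the rationally-parameterized hypothesis on $\mathcal{T}$ does its work, since it guarantees that the critical conditions cut out algebraic surfaces of the required codimension rather than full-dimensional regions.
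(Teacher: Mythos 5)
Your plan takes a genuinely different route from the paper's. The paper's argument (given explicitly for the discrete analogue, following Wenk's continuous proof) first produces a parameter $x^=$ with $\df(\tau_{x^=}(P),Q)=\eps$, places it on some critical set via Lemma~\ref{lem:config}, takes the face of $A^\eps_\crit$ through $x^=$ --- a face lying \emph{on} the critical sets --- and, if the distance exceeds $\eps$ somewhere on that face, shows that $x^=$ in fact lies on a lower-dimensional face and inducts downward on dimension. You instead work in the complement: the inclusion $\set{x : g(x)=\eps}\subseteq A^\eps_\crit$ for your $g(x)=\df(\tau_x(P),Q)$, together with connectedness and the intermediate value theorem, gives sign-constancy of $g-\eps$ on each full-dimensional cell, and you then exhibit a cell on which $g<\eps$. (Using such a cell is legitimate: the faces of an arrangement do include the full-dimensional cells of the complement.) However, there is a genuine gap exactly at the step you flagged: your argument stands or falls with the claim that each $T^\eps_\crit(c)$ has empty interior, so that the finite union is nowhere dense and cannot swallow the open set $U=\set{x : g(x)<\eps}$. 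Nothing in the paper supplies this. Lemma~\ref{lem:semialg} gives only semi-algebraicity, and the ``rationally parameterized'' hypothesis guarantees that the critical conditions are defined by rational functions, not that they cut out sets of positive codimension; so your closing paragraph attributes to that hypothesis something it does not deliver.

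Worse, the needed claim can simply fail on legitimate inputs. For translations in $\Real^2$, if $P$ has a repeated vertex $P_i=P_j$ with $i\ne j$, then for $c=(P_i,P_j,\overline{Q_k})$ the condition $\exists z\in\overline{Q_k}\;\norm{\tau_x(P_i),z}=\norm{\tau_x(P_j),z}=\eps$ degenerates to ``the circle of radius $\eps$ about $\tau_x(P_i)$ meets $\overline{Q_k}$,'' which holds on a full-dimensional set of translations. Similarly, a one-parameter family of rotations about a fixed center that coincides with a vertex $P_i$ makes $T^\eps_\crit((P_i,P_i,\overline{Q_k}))$ either empty or the entire parameter space. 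In such instances $A^\eps_\crit$ has nonempty interior, your ball argument no longer yields a point of $U$ off the arrangement, and the sign-constancy dichotomy is unavailable on cells interior to a critical set, because $\set{x : g(x)=\eps}$ may cross them. The lemma remains true in these cases, but via the paper's descent argument, which operates on faces of $A^\eps_\crit$ of every dimension and never needs the critical sets to be thin. To repair your proof you must either establish the empty-interior property under an explicit nondegeneracy assumption on $P$, $Q$, and the transformation family, or handle the case $U\subseteq A^\eps_\crit$ by falling back on a dimension-descent induction like the paper's.
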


After showing these lemmas, the strategy employed by Wenk \cite{Wenk02} is to obtain a sample point from each face in $A^\eps_\crit$ and check the \Frechet distance for each corresponding transformation.  To obtain such a sample, a result of Basu, Pollack, and Roy \cite{Basu98} is employed.  This result shows how to obtain a sample point from each face of an arrangement of semi-algebraic sets in $d$-dimensional space in $O(M^d)$ time and space, where $M$ is the number of sets.  The resulting sample set is termed a \emph{semi-algebraic sample}.  Of course, this does require one last lemma.

\begin{lemma}\emph{\hspace*{-.07in}\cite{Wenk02}} \label{lem:semialg}
For any configuration $c$, $T^\eps_\crit(c)$ is semi-algebraic.
\end{lemma}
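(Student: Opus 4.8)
The plan is to reduce the definition of $T^\eps_\crit(c)$ to a first-order formula over the reals and then invoke the Tarski--Seidenberg theorem. First I would make the rational parameterization explicit: since $\mathcal{T}$ is rationally parameterized, every entry of $A_x$ and every coordinate of $t_x$ is a rational function of $x \in \Real^d$, so for any fixed vertex $P_i$ the coordinates of $\tau_x(P_i) = A_x P_i + t_x$ are themselves rational functions of $x$. Likewise I would parameterize the segment $\overline{Q_k}$ by a single scalar $s$, writing $z(s) = (1-s)Q_{k-1} + s Q_k$ with $s \in [0,1]$. Then the squared distance $\norm{\tau_x(P_i), z(s)}^2$ is a rational function of $x$ whose numerator is a polynomial in $x$ and $s$; after clearing denominators (multiplying by the square of the common denominator, which is positive wherever the transform is defined) the equation $\norm{\tau_x(P_i), z(s)}^2 = \eps^2$ becomes a genuine polynomial condition $g_i(x,s) = 0$.

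With this setup, each branch of the definition becomes a quantified polynomial formula. For the case $i = j$, the condition $\min_{z \in \overline{Q_k}} \norm{\tau_x(P_i), z} = \eps$ is expressed as
$$ (\exists s)\bigl[\,0 \le s \le 1 \wedge g_i(x,s) = 0\,\bigr] \;\wedge\; (\forall s)\bigl[\,0 \le s \le 1 \Rightarrow g_i(x,s) \ge 0\,\bigr], $$
asserting that $\eps$ is attained on the segment and is a lower bound there. For the case $i \ne j$, the condition that there is a $z \in \overline{Q_k}$ with $\norm{\tau_x(P_i), z} = \norm{\tau_x(P_j), z} = \eps$ is simply
$$ (\exists s)\bigl[\,0 \le s \le 1 \wedge g_i(x,s) = 0 \wedge g_j(x,s) = 0\,\bigr]. $$
In both cases the quantified variable $s$ ranges over the interval $[0,1]$, which is semi-algebraic, and the body of the formula is a Boolean combination of polynomial (in)equalities in $x$ and $s$. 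By the Tarski--Seidenberg theorem, eliminating $s$ yields a quantifier-free formula in $x$ alone, so the solution set in $\Real^d$ is semi-algebraic. The symmetric configurations $c = (Q_i, Q_j, \overline{P_k})$ are handled identically: there I would parameterize the transformed segment as $\tau_x\bigl((1-s)P_{k-1} + s P_k\bigr)$, whose coordinates are again rational in $x$ and polynomial in $s$, so the same argument applies verbatim.

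The main obstacle I anticipate is bookkeeping rather than conceptual: one must check that clearing denominators in the rational parameterization does not spuriously enlarge or shrink the set. The denominators are polynomials in $x$ that vanish only on a lower-dimensional variety, precisely where the transform degenerates, and since semi-algebraic sets are closed under intersection with $\{q(x) \ne 0\}$ and under the polynomial operations used to form the $g_i$, the resulting description stays semi-algebraic; the rational functions are themselves semi-algebraic functions, which is all Tarski--Seidenberg requires. A secondary point worth recording, if the downstream sampling via Basu--Pollack--Roy needs it, is that the degrees of the $g_i$ in $(x,s)$ are bounded in terms of the degrees of the rational parameterization, so that $T^\eps_\crit(c)$ is in fact cut out by a controlled number of polynomials of controlled degree.
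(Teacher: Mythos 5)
Your proof is correct, and it is essentially the argument behind the cited result: the paper itself gives no proof of this lemma (it is quoted directly from \cite{Wenk02}), and Wenk's original justification is exactly this reduction to a first-order formula over the reals --- rational parameterization of $\tau_x$, linear parameterization of the segment, clearing denominators, and quantifier elimination via Tarski--Seidenberg. The one genuine care point, spurious solutions on the variety where the denominators of the rational parameterization vanish, is one you already handle correctly by intersecting with the set where the denominator is nonzero.
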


Since there are a total of $n^2m + nm^2$ possible configurations, and since deciding if the \Frechet distance of two curves is less than $\eps$ takes $O(nm)$ time \cite{Alt95}, the total complexity of Wenk's algorithm is $O((nm)^{d+1}(n+m)^d)$.

\section{Exact Algorithms for the TCPSM} \label{sec:tcpsm}
\subsection{Continuous Subset versions}
Wenk's algorithm relies on the concept of a configuration: a triple of two vertices from one curve and a segment from the other.  The important observation, however, is that each configuration is independent of the others, and the order of the vertices is not relevant to the construction of the arrangement.  Therefore, to solve the TCPSM in which one of the curves is unknown, computing a superset of the configurations of a valid curve is sufficient.  We can do this by considering all points and potential edges in $S$.  The configurations corresponding to a valid curve $Q$, if there is one, will be among them.  The addition of extra configurations to the arrangement does not affect the correctness of the algorithm.

Since there are $k^2$ potential edges in the point set, the number of configurations $M$ is $O(nk^2 + n^2k^2) = O(n^2k^2)$.  The semi-algebraic sample can be computed in $O(M^d)= O((nk)^{2d})$.  Finally, each possible transformation must be checked with Maheshwari's algorithm, which takes $O(nk^2)$ time  \cite{Maheshwari11}.  Thus, the total running time for the decision version is $O((nk)^{2d+1}k)$.  The optimization version can be solved using parametric search, which adds an additional log factor, leading to a running time of $O((nk)^{2d+1}k\log(nk))$.

\subsection{Discrete Subset and All-points versions}
The Discrete versions of the TCPSM can be solved in a similar fashion, but the lemmas above need to be reexamined to ensure they hold for the discrete \Frechet distance.  Lemma \ref{lem:continuous} trivially holds from the fact that the discrete \Frechet distance is defined by a set of minimums and maximums of Euclidean distance functions. 

\begin{lemma} \label{lem:dcontinuous}
For any two polygonal curves $P$ and $Q$, $\dF(\tau_x(P), Q)$ is continuous as a function of $x \in \Real^d$.
\end{lemma}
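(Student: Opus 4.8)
The plan is to exploit the feature that distinguishes the discrete \Frechet distance from its continuous counterpart: it is an optimization over a \emph{finite} index set that does not vary with $x$. Recall that
$$
\dF(\tau_x(P), Q) = \min_{(a, b) \in W} \max_i \norm{\tau_x(P_{a_i}), Q_{b_i}},
$$
where $W$ is the set of all paired walks for $\tau_x(P)$ and $Q$. The crucial observation is that $W$ depends only on the lengths $n$ and $m$ of the two curves, not on the positions of their vertices; hence $W$ is a single fixed finite set, independent of the transformation parameter $x$.

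First I would argue that each individual term $x \mapsto \norm{\tau_x(P_{a_i}), Q_{b_i}}$ is continuous. Since $\tau_x(P_{a_i}) = A_x P_{a_i} + t_x$ and the entries of $A_x$ and $t_x$ depend continuously on $x$ (the family $\mathcal{T}$ is rationally parameterized, hence continuous on its parameter domain), the transformed vertex $\tau_x(P_{a_i})$ is a continuous function of $x$; composing with the continuous Euclidean norm gives continuity of each term.

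Next, for each fixed paired walk $(a,b) \in W$, the inner quantity $\max_i \norm{\tau_x(P_{a_i}), Q_{b_i}}$ is a maximum of finitely many continuous functions and is therefore continuous. Finally, $\dF(\tau_x(P), Q)$ is the pointwise minimum over the fixed finite set $W$ of these continuous functions, and a finite minimum of continuous functions is continuous. This yields the claim.

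The one point that deserves care --- and is also the reason the discrete case is in fact \emph{easier} than Lemma \ref{lem:continuous} rather than harder --- is precisely the $x$-independence of $W$: because the set of admissible couplings never changes as $x$ varies, the min and max range over fixed finite index sets, so no uniformity or limiting argument over a varying domain is required. The only remaining technicality is continuity of the map $x \mapsto \tau_x$ itself, which follows from the rational parameterization of $\mathcal{T}$ exactly as in the continuous setting, and I do not anticipate this to be an obstacle.
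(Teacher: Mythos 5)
Your proof is correct and follows essentially the same route as the paper, which justifies the lemma by noting that the discrete \Frechet distance is a minimum and maximum over finitely many Euclidean distance functions; you simply spell out the details (the $x$-independence of the set of paired walks and the continuity of $x \mapsto \tau_x$) that the paper treats as trivial.
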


Lemma \ref{lem:config} is not so easy; it only holds true for the continuous \Frechet distance.  Its discrete counterpart is as follows.
\begin{lemma} \label{lem:dconfig}
If $\dF(P, Q) = \eps$, then there exist two vertices $P_i$ and $Q_j$ for which $\norm{P_i, Q_j} = \eps$.
\end{lemma}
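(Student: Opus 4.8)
The plan is to exploit the purely combinatorial nature of the discrete \Frechet distance: because the coupling never evaluates distances at interior segment points, the value realizing $\eps$ must automatically be a vertex-vertex distance, so no segment appears in the statement.

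First I would observe that the set $W$ of paired walks is finite whenever $P$ and $Q$ have finitely many vertices, since every coupling sequence has length at most $n + m$ and its entries range over a bounded integer grid. Consequently the outer minimum in the definition $\dF(P, Q) = \min_{(a,b)\in W} \max_i \norm{P_{a_i}, Q_{b_i}}$ is attained by some optimal walk $(a^*, b^*)$, for which $\max_i \norm{P_{a^*_i}, Q_{b^*_i}} = \eps$. Since this inner maximum is taken over a finite index set, it too is attained, say at index $\ell$. Setting $i = a^*_\ell$ and $j = b^*_\ell$ then yields a pair of curve vertices with $\norm{P_i, Q_j} = \eps$, which is exactly the claim.

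The one point worth flagging — and the reason this discrete statement is genuinely weaker than the continuous Lemma \ref{lem:config} — is why no segment-based configuration is ever forced. In the continuous setting the optimal reparametrization may pin a vertex of one curve against the interior of a segment of the other, which is precisely what gives rise to the two-vertices-and-a-segment case of Lemma \ref{lem:config}. In the discrete setting, by contrast, every coupled pair $(P_{a_i}, Q_{b_i})$ is a vertex-vertex pair by the very definition of a paired walk, so that possibility cannot arise and $\eps$ is realized by two curve vertices directly. I do not expect any real obstacle: the entire content of the argument is that both the outer minimum over $W$ and the inner maximum over the coupling indices range over finite sets and are therefore achieved.
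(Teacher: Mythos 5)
Your proof is correct and follows essentially the same route as the paper's: both arguments rest on the finiteness of the set of paired walks, so that the outer minimum and inner maximum in the definition of $\dF$ are attained, forcing $\eps$ to equal some vertex--vertex distance $\norm{P_i, Q_j}$. Your version merely spells out the attainment argument (and the contrast with the continuous Lemma~\ref{lem:config}) in more detail than the paper's two-sentence proof.
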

\begin{proof}
The proof is much simpler that the corresponding continuous Lemma, owing to the fact that the number of unique paired walks is finite.  Since the expression inside the minimum and maximum of the discrete \Frechet distances' definition is the distance between two vertices, $\dF$ must be one of these distances. 
\end{proof}

Thus, configurations for the discrete problem are redefined as a pair $(P_i, Q_j)$, and critical transformations of a configuration $c = (P_i, Q_j)$ are redefined as:
$$
T^\eps_\crit(c) = \{x \in \Real^d \mid \norm{\tau_x(P_i), Q_j} = \eps \}
$$

The discrete \Frechet equivalent of Lemma \ref{lem:semialg} trivially holds for this new definition, as $T^\eps_\crit(c)$ is defined by a simple polynomial expression.  The discrete \Frechet equivalent of Lemma \ref{lem:face} also holds, but although the proof is very similar to proof for the continuous version given in \cite{Wenk02}, we feel it is different enough to warrant at least a sketch of the proof, highlighting the differences between the two.

\begin{lemma}
If there exists a transformation $\tau_x$ such that $\dF(\tau_x(P), Q) < \eps$ then there exists some face $F \in A^\eps_\crit$ such that $\dF(\tau_y(P), Q) \le \eps$ for all $y \in F$.
\end{lemma}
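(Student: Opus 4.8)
The plan is to exploit the fact that, in contrast to the continuous case, the discrete \Frechet distance is determined entirely by the finite combinatorial data of which vertex pairs $(P_i, Q_j)$ lie within distance $\eps$. First I would record the characterization that $\dF(\tau_x(P), Q) \le \eps$ holds if and only if there is a paired walk from $(1,1)$ to $(n,m)$ along which every coupled pair satisfies $\norm{\tau_x(P_{a_i}), Q_{b_i}} \le \eps$; this is immediate from the $\min$\nobreakdash-over\nobreakdash-walks, $\max$\nobreakdash-over\nobreakdash-pairs definition of $\dF$. Thus feasibility at a parameter $x$ depends only on the \emph{admissible set} $G(x) = \{(i,j) : \norm{\tau_x(P_i), Q_j} \le \eps\}$ together with the purely combinatorial question of whether $G(x)$ contains a monotone staircase from $(1,1)$ to $(n,m)$.

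The central step is to observe that $G$ is constant on each face of $A^\eps_\crit$. Each configuration $c = (P_i, Q_j)$ contributes the surface $T^\eps_\crit(c) = \{x : \norm{\tau_x(P_i), Q_j} = \eps\}$ to the arrangement, and by the definition of a face, a face $F$ meets no such surface except those that contain it entirely. Hence for every configuration the sign of $\norm{\tau_y(P_i), Q_j} - \eps$ is constant as $y$ ranges over $F$: it is identically zero when $F$ lies on that surface, and otherwise strictly positive or strictly negative throughout. Consequently the predicate $\norm{\tau_y(P_i), Q_j} \le \eps$ is invariant on $F$, so $G(y)$ equals a fixed set $G_F$ for all $y \in F$, and the feasibility predicate ``$\dF(\tau_y(P), Q) \le \eps$'' — which reads off $G_F$ only — is likewise constant on $F$.

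Assembling these, let $F$ be the face of $A^\eps_\crit$ whose relative interior contains $x$ (equivalently, by Lemma~\ref{lem:dcontinuous} one may first pass to a full-dimensional face inside a neighborhood on which $\dF(\tau_\cdot(P),Q) < \eps$, mirroring Wenk's route, but this is not needed here). Since $\dF(\tau_x(P), Q) < \eps$, the minimizing paired walk achieves a maximum coupled distance strictly below $\eps$, so every pair on that walk lies in $G(x) = G_F$; thus $G_F$ already contains the required monotone staircase and the feasibility predicate holds at $x$. By the invariance established above it holds at every $y \in F$, giving $\dF(\tau_y(P), Q) \le \eps$ throughout $F$. The one point demanding care — and the substantive difference from Wenk's continuous argument — is precisely the face-invariance of admissibility: in the continuous setting this requires tracking reachability through the free-space diagram, whose combinatorial type changes only across the critical surfaces, whereas here that role is played by the finite staircase characterization, so the invariance becomes immediate once the sign-constancy of each distance function on $F$ is noted.
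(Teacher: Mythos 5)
Your proof is correct, but it takes a genuinely different and more direct route than the paper's. The paper transplants Wenk's continuous argument into the discrete setting: it first uses continuity (Lemma~\ref{lem:dcontinuous} and its corollary) to pass from $\dF < \eps$ to a transformation $x^=$ with $\dF = \eps$, locates $x^=$ on a critical surface via Lemma~\ref{lem:dconfig}, and then argues by descent on face dimension, using a path inside the current face and an auxiliary configuration $c^*$ to drop to a lower-dimensional face of $A^\eps_\crit$ whenever feasibility fails somewhere on it. You instead exploit that discrete feasibility is a Boolean function of the sign vector of the arrangement's defining functions: $\dF(\tau_y(P),Q) \le \eps$ holds iff the admissible set $G(y)$ contains a monotone coupling from $(1,1)$ to $(n,m)$, and $G$ is constant on each face because a connected face that avoids a surface $T^\eps_\crit(c)$ cannot witness a sign change of the continuous function $y \mapsto \norm{\tau_y(P_i), Q_j} - \eps$, while a face contained in that surface has the sign identically zero. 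This yields a strictly stronger conclusion --- feasibility is constant on \emph{every} face, so the face containing $x$ itself already works --- and it dispenses with the continuity lemma, the pass to an exact-$\eps$ transformation, Lemma~\ref{lem:dconfig}, and the induction entirely. It also isolates precisely why the discrete case is easier than the continuous one: continuous \Frechet feasibility is \emph{not} determined by the sign conditions of the critical surfaces alone (free-space reachability carries additional information), which is why Wenk, and the paper following him, must settle for the weaker ``there exists a face'' formulation and argue by descent. Your approach buys simplicity and a sharper invariance statement; the paper's buys uniformity with Wenk's continuous framework.
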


\begin{proof}
    By the continuity of discrete \Frechet distance and its corollary, the existence of a transformation $\tau_x$ for which $\dF(\tau_x(P), Q) < \eps$ implies the existence of a transformation $\tau_{x^=}$ for which $\dF(\tau_{x^=}(P), Q) = \eps$.  By Lemma \ref{lem:dconfig}, there is some configuration $c$ for which $x^= \in T^\eps_\crit(c)$.  Let $F$ be the connected component of  $T^\eps_\crit(c)$ that contains $x^=$.  If  $\dF(\tau_y(P), Q) \le \eps$ for all $y \in F$, the claim is shown.  Otherwise, let $x^> \in F$ be such that $\dF(\tau_{x^>}(P), Q) > \eps$.
    
    Let $R$ be a curve on $F$ such that $R(0) = x^=$ and $R(1) = x^>$, and let $\eps(r) = \dF(\tau_r(P), Q)$ for $r \in R$.  Assume without loss of generality that $\eps(R(s)) > \eps$ for any $s > 0$.  From the definition of discrete \Frechet distance and the continuity of critical transformations as a function of $\eps$, we have that there must exist some open neighborhood $S$ around 0 for which $R(s) \in T^{\eps(R(s))}_\crit(c^*)$ for all $s \in S$ and some configuration $c^*$.  Therefore, since $T^{a}_\crit(c^*) \cap T^{b}_\crit(c^*) = \emptyset$ for $a \ne b$, we have that $R(0) \in T^{\eps}_\crit(c^*)$, but $R(s) \notin T^{\eps}_\crit(c^*)$ for any small value of $s$.  However, $R \subseteq F \subseteq T^{\eps}_\crit(c)$.  Therefore, $c \ne c^*$ and  $x^= \in T^{\eps}_\crit(c) \cap T^{\eps}_\crit(c^*)$.  We then apply the same argument to the lower dimensional face $F \subseteq T^{\eps}_\crit(c) \cap T^{\eps}_\crit(c^*)$ of $A^\eps_\crit$, and the claim follows by induction.

\end{proof}

With the equivalent lemmas in hand, the solution approach remains the same.  However, the time complexity is of course much faster than the continuous version, owing to the smaller number of configurations and the simpler algorithm for the discrete CPSM.  The number of configurations is $O(nk)$, and each sample point in the transformation space can be tested in $O(nk)$ time, leading to a final running time of $O((nk)^{d+1})$ for the decision version and $O((nk)^{d+1}\log(nk))$ for the optimization version.  The parametric search analysis in \cite{Wenk02} for computing the $\eps$ values to check applies straightforwardly.

\section{Discrete CPSM for Translations in $\Real^2$}
The exact algorithms presented in the previous section all rely on a combination of previous algorithms in the literature. However, for the special case of translations in $\Real^2$ for the Discrete TCPSM, we can use a different approach and solve the problem directly, without resorting to either Wenk's algorithm or Maheshwari's.  We refer to the version of the TCPSM where the set of allowed transformations is restricted to only translations as the \textbf{tCPSM}.  While the algorithm presented in Section \ref{sec:tcpsm} can solve this version of the problem in $O(n^3k^3)$ time, the algorithm presented below can solve it in $(n^2k^2\log(nk))$ time.

The Discrete Subset tCPSM can be reformulated as follows: Find a translation of $P$ such that every vertex of $P$ is within $\eps$ of some point in $S$.  Consider the set of all translations $t \in \Real^2$ that put a vertex $P_i$ within $\eps$ of a point $s \in S$.  This set is a disk in the plane.  With all combinations of vertices of $P$ and points in $S$, we have $O(nk)$ such disks.  If we consider each disk to be ``colored'' with a unique color corresponding to its associated vertex of $P$, the problem can be reformulated as follows: Find a single point that lies within at least one disk of every color.  This problem can be solved by a plane-sweep algorithm, which we now describe.

The event points of the plane-sweep algorithm are the tops and bottoms of each of the $O(nk)$ disks, as well as their $O(n^2k^2)$ intersection points.  The intersection of the sweep line with each disk divides the sweep line into intervals.  For each interval, we store a membership array that keeps track of how many disks of each color that interval is inside, as well as a counter that records how many elements of the membership array are non-zero.  When the sweep line reaches the top of a new disk, the interval in which the disk top resides is split into three pieces, each with the copy of the membership array.  The array element of the color of the new disk is incremented in the middle piece's array, which corresponds to the sweep line interval inside the new disk.  If that element was zero, the non-zero counter is incremented.  When the sweep line reaches the bottom of a disk, the corresponding interval is removed and the two adjacent intervals are merged; their membership arrays will be identical.  At an intersection point, the membership arrays and non-zero counters of the three intervals involved are updated accordingly.  If at any time some interval's non-zero counter becomes equal to $n$, the algorithm stops and returns a point of that interval as the solution.  Otherwise, if the sweep line passes the last disk, the algorithm reports that there is no solution.  The full pseudo-code listing is shown Algorithm \ref{algo:translation}.

    \begin{algorithm}[h]
        \caption{Discrete tCPSM in $\Real^2$ $(P, S, \eps)$}
        \label{algo:translation}
        \begin{algorithmic}[1]
            \State Compute the disks corresponding to $P$ and $S$ and their intersection points.
            \State Create a priority queue using the intersection points and the top and bottom of each disk.
            \State Using a balanced binary search tree, initialize a sweep line with a single interval $I$, whose membership array $M_I$ and non-zero counter $Z_I$ are zeroed.
            \While{the queue is non-empty}
                \State Dequeue an event point $p$.
				\State Find the sweep line interval $I$ in which $p$ resides.
                \If{$p$ is the top of a disk with color $c$}
                    \State Split $I$ into $I, J, K$, copying its membership array.
                    \State Increment $M_{J}[c]$.
                    \State Increment $Z_{J}$ if $M_{J}[c]$ was zero.
                    \If{$Z_{J} = n$}
                        \State \Return $p$ as the solution.
                    \EndIf
                \ElsIf{$p$ is the bottom of a disk}
					\State Delete $I$ from the sweepline.
                    \State Merge its two neighbors, deleting the membership array of one.
                \ElsIf{$p$ is the intersection point of two disks of colors $c_1$ and $c_2$}
                    \State Increment or decrement $M_{I}[c_1]$ and $M_{I}[c_2]$ accordingly.
                    \State Increment or decrement $Z_{I}$ accordingly.
                    \If{$Z_{I} = n$}
                        \State \Return $p$ as the solution.
                    \EndIf
                \EndIf
            \EndWhile
            \State \Return that there is no solution.
        \end{algorithmic}
    \end{algorithm}

\subsection{Analysis}
Lines 1 through 3 take $O(n^2k^2)$ time.  The while loop of line 4 iterates over $O(n^2k^2)$ points, taking $O(\log(nk))$ time for each dequeue operation.  Using a balanced binary search tree as the sweep line data structure allows lines 6 and 15 to be accomplished in $O(\log(nk))$ time.  Line 8 takes $O(n)$ time, but is only executed for disk top points, of which there are only $O(nk)$.  The rest of the lines in the while loop are all $O(1)$.  Thus, the total running time of the algorithm is $O(n^2k^2\log(nk))$.  Surprisingly, this is faster than the best known algorithm to solve the problem when both curves are known, which is $O(n^3m^3)$ \cite{Jiang08}.

\subsection{All-points Version}
If we add the additional constraint that every point in $S$ be within $\eps$ from some vertex in $p$, the problem becomes equivalent to the All-Points variation.  To solve this variation, we define each disk to be of two colors, one for its corresponding vertex in $P$ and one for its corresponding point in $S$.  The algorithm is easily modified to increment or decrement two entries in the membership array instead of one when a new disk or a disk intersection point is encountered.  The non-zero counter of an interval will have to rise to $n + k$ to report a valid solution, and copying the membership array in line 8 will take $O(n + k)$ time.  However, the time complexity is still identical, as the cost of iterating through all event points in sorted order still dominates.

\section{Approximation Algorithms} \label{sec:approx}
\subsection{3-Approximation for Continuous All-Points TCPSM} \label{sec:3approx}
In the appendix of \cite{Accisano14}, we presented a 3-factor approximation algorithm for the Continuous All-points version of the CPSM, which is NP-complete.  The approximation algorithm works by first deciding if there is curve that, in addition to having \Frechet distance at most $\eps$ from $P$, visits each point in $S$ at its closest segment.  We call such a curve \emph{NS-compliant}.  

\begin{definition}
Let $\eps = \df(P, Q)$, where $P$ and $Q$ are curves, and for a given point $q$, let $c(q)$ be the index of the segment of $P$ nearest $q$.  $Q$ is said to be \emph{NS-compliant} if, for every point $q$ in the vertex set $V(Q)$, there exists a vertex $Q_i = q$ that is visited at $\overline{P_{c(q)}}$.
\end{definition}

\begin{lemma} \label{lem:approx}
Let $\eps = \df(P, Q)$, where $P$ and $Q$ are curves.  There exists an NS-compliant curve $Q'$ with the same vertex set as $Q$ such that $\df(P,Q') \le 3\eps$.
\end{lemma}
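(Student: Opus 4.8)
The plan is to build $Q'$ from $Q$ by splicing each vertex of $Q$ into the curve at the place where it is closest to $P$, and to bound the cost of each splice with a triangle inequality that loses exactly a factor of three. The starting point is two elementary distance bounds. Fix a reparametrization realizing $\df(P,Q)=\eps$. For a vertex $q\in V(Q)$ let $p^\ast$ be the (unique) closest point of its nearest segment $\overline{P_{c(q)}}$; since the matching places every point of $Q$ within $\eps$ of some point of $P$, we have $\norm{q,p^\ast}\le\eps$. Let $y^\ast$ be the point of $Q$ that the reparametrization matches to $p^\ast$, so that $\norm{y^\ast,p^\ast}\le\eps$ and hence $\norm{q,y^\ast}\le 2\eps$ by the triangle inequality. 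The point $y^\ast$ lies on some edge of $Q$, and this is the edge into which $q$ will be spliced.

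The construction then proceeds edge by edge. For each edge $\overline{Q_i}$, collect all vertices $q$ whose matched point $y^\ast$ lies on $\overline{Q_i}$ and list them as $q_1,\dots,q_r$ in the order in which the points $p^\ast(q_1),\dots,p^\ast(q_r)$ occur along $P$; then replace $\overline{Q_i}$ by the polygonal path $Q_{i-1}\to q_1\to\cdots\to q_r\to Q_i$. Carrying this out for every edge yields $Q'$. Since every inserted vertex already belongs to $V(Q)$, we get $V(Q')=V(Q)$, and each $q$ is now matched to $p^\ast(q)\in\overline{P_{c(q)}}$, which is exactly what NS-compliance demands once the global distance bound is in place.

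To bound $\df(P,Q')$ I would reuse the original reparametrization. The part of $P$ matched to $\overline{Q_i}$ is a subpolyline that the original matching cuts at the points $p^\ast(q_1),\dots,p^\ast(q_r)$, matching the subsegment of $\overline{Q_i}$ between consecutive $y^\ast(q_a)$'s to the corresponding piece of $P$ with distance at most $\eps$. For the replacement segment $q_a q_{a+1}$ I keep the same time parametrization of $P$ but interpolate linearly between $q_a$ and $q_{a+1}$ on the $Q'$ side. At time $t$ the new $Q'$-point differs from the old $Q$-point by $(1-t)(q_a-y^\ast(q_a))+t(q_{a+1}-y^\ast(q_{a+1}))$, a convex combination of two displacements each of norm at most $2\eps$, hence of norm at most $2\eps$; adding the original $\eps$ gives at most $3\eps$. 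The boundary pieces $Q_{i-1}\to q_1$ and $q_r\to Q_i$ are handled identically, with zero displacement at the vertex end. Concatenating over all edges produces one monotone reparametrization with pointwise distance at most $3\eps$, so $\df(P,Q')\le 3\eps$.

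The main obstacle is not any single estimate but the constraint, hidden in the phrase ``same vertex set,'' that $Q'$ may use only the existing vertices of $Q$: we cannot splice the spike directly at the interior point $y^\ast$, which would cost only $\eps$, but must route it through the genuine vertices $Q_{i-1},q_1,\dots,q_r,Q_i$, and it is precisely this detour from $y^\ast$ out to $q$ (a distance up to $2\eps$) that forces the extra $2\eps$ and hence the factor $3$. The remaining care is bookkeeping: checking that ordering the insertions by the position of $p^\ast$ along $P$ keeps the combined reparametrization nondecreasing on both curves, and dispatching the degenerate cases where $p^\ast$ is matched to an entire subsegment of $Q$ or where $y^\ast$ happens to be a vertex, both of which collapse to the generic argument above.
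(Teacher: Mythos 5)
Your proof is correct and takes essentially the same approach as the paper's: the paper likewise inserts each matched point (your $y^\ast$, its $s'$) as a new vertex of $Q$, which leaves the \Frechet distance unchanged, and then translates it to the corresponding vertex (your $q$, its $s$), bounding the increase by the $2\eps$ vertex displacement via the same convex-combination argument, giving $\df(P,Q') \le \eps + 2\eps = 3\eps$. The only difference is presentational: you perform insertion and displacement in one pass with explicit ordering and reparametrization bookkeeping, whereas the paper separates them into an intermediate curve $Q^*$ followed by a merging step.
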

\begin{proof}
Since $Q$ might not be NS-compliant, there may be some vertices of $Q$ that are not visited at their closest segment of $P$; let $s$ be such a point.  Since $s$ is visited at a segment other than its closest, it can be no further than $\eps$ away from $\overline{P_c(s)}$.  Let $p$ be a point in $\overline{P_c(s)}$ that is within $\eps$ of $s$.  Recall that there is always at least one feasible pair for any point on $P$; let $(s', p)$ be feasible.  Then, add $s'$ as a new vertex of $Q$.  Note that the distance between $s$ and $s'$ is at most $2\eps$.  Repeating this process for every point not visited at its closest segment yields a new curve $Q^*$.  Since each new vertex has been added along an existing segment, $\df(P, Q) = \df(P, Q^*)$.

Now, merge each $s'$ with its corresponding $s$ by translating the former to the position of the latter, yielding a new curve $Q'$ with a potentially different \Frechet distance from $P$.  Let $\sigma$ and $\tau$ be reparameterizations of $P$ and $Q^*$, and consider the point $Q^*(\tau(t))$ for some $t \in [0, 1]$, which lies on some segment of $Q^*$.  The endpoints of the corresponding segment in $Q'$ may have been displaced up to $2\eps$, and thus the point $Q'(\tau(t))$ may be up to $2\eps$ away from $Q^*(\tau(t))$.  Therefore, $\norm{P(\sigma(t)), Q'(\tau(t))}$ can be at most $2\eps$ larger than $\norm{P(\sigma(t)), Q^*(\tau(t))}$.  Finally, since the \Frechet distance is the infimum of the maximum distance over all reparameterizations, we have that $\df(P, Q') \le \df(P, Q^*) + 2\eps = 3\eps$.
\end{proof}

An algorithm presented in the appendix of \cite{Accisano14} can be used for deciding whether there exists a curve which visits all points in $S$, has \Frechet distance at most $\eps$ from $P$, and is NS-compliant.  It runs in $O(nk^2)$ time, and the optimal NS-compliant curve can be found in $O(nk^2\log(nk))$ time by way of parametric search.  Since the optimal curve can be made NS-compliant while only increasing its \Frechet distance by a factor of 3, this yields a 3-approximation algorithm.

\begin{theorem}
	The Continuous All-points CPSM can be 3-approximated in $O(nk^2)$ time.
\end{theorem}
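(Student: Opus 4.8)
The plan is to prove the theorem as a black-box combination of Lemma~\ref{lem:approx} with the NS-compliant decision subroutine of \cite{Accisano14}; essentially all of the technical content already lives in those two ingredients, so the task is to verify that their guarantees compose with the right constant and within the right time bound. Let $Q_\opt$ be an optimal All-points solution, i.e.\ a curve whose vertices are points of $S$, which visits every point of $S$, and which minimizes $\df(P, Q_\opt) =: \eps_\opt$. The approximate-decision primitive I would use is simply the NS-compliant decider run at a rescaled threshold: on a query value $\eps$, invoke the subroutine with threshold $3\eps$, and report ``yes, with witness $Q$'' if it returns a curve, and ``no All-points curve attains \Frechet distance $\eps$'' otherwise.

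For correctness, the ``yes'' branch is immediate: by definition the subroutine only returns an All-points curve $Q$ with $\df(P, Q) \le 3\eps$, which is a valid $3$-approximate witness. The ``no'' branch is where Lemma~\ref{lem:approx} enters, via its contrapositive. Suppose, for contradiction, that some All-points curve had \Frechet distance at most $\eps$ from $P$; then $\eps_\opt \le \eps$, and applying Lemma~\ref{lem:approx} to $Q_\opt$ produces an NS-compliant curve $Q'$ with the \emph{same vertex set} and $\df(P, Q') \le 3\eps_\opt \le 3\eps$. The one point needing care is that $Q'$ must itself remain a feasible All-points curve: since $V(Q') = V(Q_\opt) \supseteq S$, the curve $Q'$ still uses only points of $S$ and still visits every point of $S$, so it is an NS-compliant All-points curve of \Frechet distance at most $3\eps$. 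Its existence forces the subroutine to return true at threshold $3\eps$, contradicting the ``no'' branch. Hence the procedure distinguishes $\eps_\opt \le \eps$ from $\eps_\opt > 3\eps$, which is precisely a factor-$3$ approximate decision.

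The running time is then immediate: a single call to the NS-compliant decider costs $O(nk^2)$, and rescaling the threshold is free, yielding the claimed $O(nk^2)$ bound. (If one instead wants to approximate the optimal \emph{value}, one sweeps the threshold by parametric search, paying the extra $O(\log(nk))$ factor already noted in the surrounding discussion.) I expect the only genuinely delicate step to be the preservation of the All-points constraint under the construction of Lemma~\ref{lem:approx}: that lemma is stated for arbitrary curves and promises only NS-compliance together with a bounded blow-up in \Frechet distance, so one must confirm that the vertex-addition-and-merging construction in its proof never removes a point of $S$ from the vertex set. This holds because new vertices are only inserted and then translated onto pre-existing vertices, so $V(Q')$ coincides with $V(Q_\opt)$ as a point set; everything else is routine bookkeeping.
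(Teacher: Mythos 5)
Your proposal is correct and takes essentially the same route as the paper: the paper also proves this theorem by combining Lemma~\ref{lem:approx} with the $O(nk^2)$ NS-compliant decision algorithm of \cite{Accisano14}, so that a ``yes'' answer yields a witness curve within factor $3$ and a ``no'' answer certifies (via the lemma's contrapositive) that no All-points curve meets the original threshold. Your rescaling of the query threshold to $3\eps$ and your explicit check that the construction in Lemma~\ref{lem:approx} preserves the vertex set (hence the All-points constraint) are only minor elaborations of the paper's terser argument.
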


When integrating this result into Wenk's framework, it is tempting to simply apply the decision algorithm to check for an NS-compliant curve at each semi-algebraic sample point in the transformation space, and then use parametric search to find the optimal NS-compliant curve.  However, this will not work.  As $P$ is translated or transformed, the closest segment to each point in $S$ can change, which means curves that were NS-compliant for one transformation may be non-compliant for others.  Therefore, even if $\tau(P)$ is exactly the optimal \Frechet distance away from the optimal NS-compliant curve, the curve may not be NS-compliant for that particular value of $\tau$, causing the decision algorithm to return {\sc false}.

The solution to this problem is to remember that our goal is not to find the optimal NS-compliant curve, but to find the optimal unrestricted curve.  To that end, we modify the algorithm so that, at each step $\eps$ of the parametric search, it checks each sample point in the transformation space for an NS-compliant curve with \Frechet distance at most $3\eps$.

Let $Q_\opt$ be the optimal unrestricted curve that visits every point in $S$, and let $\eps_\opt$ be the \Frechet distance from $Q_\opt$ to the optimal transformation of $P$.  As before, the configurations corresponding to $Q_\opt$ will be among those added to the arrangement, and the others will have no effect on the correctness.  For a given parametric search step $\eps \ge \eps_\opt$, there will be at least one sample point $\tau$ in the parameter space for which $\df(\tau(P), Q_\opt) = \eps$.  For this value of $\tau$, Lemma \ref{lem:approx} guarantees that there will be an NS-compliant curve with \Frechet distance at most $3\eps$ from $\tau(P)$, regardless of which point happens to be closest to which segment.  Thus, the parametric search will continue downward and is guaranteed to terminate at some step $\eps$ for which $\eps_\opt/3 \le \eps \le \eps_\opt$, yielding a transformation $\tau$ and a curve $Q$ with $\eps_\opt \le \df(\tau(P), Q) \le 3\eps_\opt$.

\begin{theorem}
The Continuous All-Points TCPSM can be 3-approximated in $O((nk)^{2d+1}k\log(nk))$ time.
\end{theorem}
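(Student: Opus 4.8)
The plan is to graft the three-factor approximation for the Continuous All-points CPSM onto Wenk's transformation-space framework, reusing the superset-of-configurations idea from the Continuous Subset analysis and running parametric search with a \emph{relaxed} acceptance threshold. Concretely, I would first build the arrangement $A^\eps_\crit$ over all $O(n^2k^2)$ configurations formed from the $n$ vertices and segments of $P$ together with the $k$ points and $k^2$ candidate edges of $S$; as in the Subset case, the configurations realized by the unknown optimal curve $Q_\opt$ lie among these, and the spurious extra configurations do not affect correctness. I would then extract a semi-algebraic sample (one representative transformation per face) via the Basu--Pollack--Roy result \cite{Basu98}, and at each sample transformation $\tau$ invoke the $O(nk^2)$ decision routine of \cite{Accisano14} that tests for an NS-compliant all-points curve, but set its distance threshold to $3\eps$ rather than $\eps$. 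This combined decision is wrapped in parametric search over $\eps$, exactly as in the Subset optimization.

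For correctness I would establish the two halves of an approximate-decision sandwich. \emph{Completeness}: for any $\eps \ge \eps_\opt$ some transformation puts $P$ within $\eps$ of $Q_\opt$, so Lemma \ref{lem:face} applied to $Q_\opt$ yields a face of $A^\eps_\crit$ whose sample point $\tau$ satisfies $\df(\tau(P), Q_\opt) \le \eps$; Lemma \ref{lem:approx} then produces an NS-compliant curve $Q'$ with the same (hence all-points) vertex set and $\df(\tau(P), Q') \le 3\eps$, so the relaxed decision accepts. \emph{Soundness}: whenever the decision accepts at level $\eps$ it exhibits a genuine all-points curve $Q$ with $\df(\tau(P), Q) \le 3\eps$, which forces $3\eps \ge \eps_\opt$. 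These two facts pin the search to terminate at some $\eps^*$ with $\eps_\opt/3 \le \eps^* \le \eps_\opt$, and the witnessing curve satisfies $\df(\tau(P), Q) \le 3\eps^* \le 3\eps_\opt$, giving the approximation ratio.

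The main obstacle — and the reason the naive ``check for NS-compliance at threshold $\eps$'' approach fails — is that NS-compliance is not invariant under transforming $P$: the segment of $P$ nearest a given point of $S$ can change as $\tau$ varies, so $Q_\opt$ itself need not be NS-compliant at the sample transformation that realizes $\df(\tau(P),Q_\opt)\le\eps$. The resolution I would stress is that Lemma \ref{lem:approx} never requires the optimal curve to be compliant; it guarantees the \emph{existence} of some NS-compliant curve within $3\eps$ relative to whatever closest-segment assignment $\tau$ induces. Because the decision routine searches for any such compliant curve rather than certifying a fixed one, this existence is precisely what it detects, and the mismatch between $Q_\opt$'s compliance and $\tau$'s geometry is absorbed into the factor of three. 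I would also phrase the sandwich against the genuinely monotone predicate ``some transformation admits an all-points curve within $\eps$,'' whose threshold is exactly $\eps_\opt$, so that the parametric search converges cleanly even though the relaxed predicate is not itself exactly monotone; the critical $\eps$ values to probe are supplied by Wenk's parametric-search analysis unchanged.

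Finally, for the running time I would re-tally the Subset analysis with the new per-sample cost: the $O((nk)^2)$ configurations produce a semi-algebraic sample of $O((nk)^{2d})$ transformations, each tested in $O(nk^2)$ by the NS-compliant decision, for an $O((nk)^{2d+1}k)$ decision cost, and the parametric search contributes the extra $O(\log(nk))$ factor, yielding the claimed $O((nk)^{2d+1}k\log(nk))$ bound.
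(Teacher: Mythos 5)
Your proposal matches the paper's own proof essentially step for step: the same superset-of-configurations arrangement, the same semi-algebraic sampling, the same relaxed $3\eps$ threshold for the NS-compliant decision routine inside the parametric search, and the same identification of the key subtlety that NS-compliance is not invariant under transforming $P$, resolved the same way via Lemma \ref{lem:approx}. Your explicit completeness/soundness sandwich and the remark on monotonicity are slightly more careful renderings of what the paper states informally, but the approach and the running-time tally are identical.
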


\subsection{(1+$\eps$)-Approximation}
The running times of the exact algorithms discussed in the previous sections are all quite high, even when the degrees of freedom are few.  Because of this, it makes sense to look at approximation algorithms that might have lower time complexity.  In \cite{Alt01}, the authors present a $(1+\eps)$ approximation algorithm for the two-curve matching problem under translations in $\Real^2$.  In this section, we generalize their approach to work for the tCPSM problem in $\Real^d$.  We make use of the following Lemma, which was proven in \cite{Wenk02}, and a key observation about both types of \Frechet distance.

\begin{lemma} \emph{\hspace*{-.07in}\cite{Wenk02}} \label{lem:translate}
Let $P$ and $Q$ be curves and let $t$ be a translation vector in $\Real^d$.  Then $\df(P + t, Q) \le \df(P, Q) + \norm{t}$.  The same applies to discrete \Frechet distance.
\end{lemma}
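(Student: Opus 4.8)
The plan is to derive the bound directly from the pointwise triangle inequality for the Euclidean norm, using the observation that a translation changes neither the domain of $P$ nor the combinatorial structure of the optimal matching. Writing $\norm{a, b}$ for the distance between points $a$ and $b$ as in the definition above, the single fact I would exploit is $(P+t)(s) = P(s) + t$, so that any reparameterization admissible for $P$ is equally admissible for $P + t$.

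For the continuous statement, I would first fix an arbitrary pair of continuous non-decreasing surjections $\sigma, \tau : [0,1] \to [0,1]$ and bound the matched distance pointwise:
$$\norm{(P+t)(\sigma(s)), Q(\tau(s))} = \norm{P(\sigma(s)) + t - Q(\tau(s))} \le \norm{P(\sigma(s)), Q(\tau(s))} + \norm{t}$$
for every $s \in [0,1]$. Since $\norm{t}$ is constant in $s$, taking $\max_s$ preserves the inequality with the term $\norm{t}$ intact, and then---because $(\sigma, \tau)$ ranges over exactly the same admissible set for both curves---taking $\inf_{\sigma, \tau}$ on both sides yields $\df(P+t, Q) \le \df(P, Q) + \norm{t}$.

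For the discrete statement I would run the identical argument with a coupling sequence $(a, b) \in W$ in place of $(\sigma, \tau)$: the inequality $\norm{(P+t)_{a_i}, Q_{b_i}} \le \norm{P_{a_i}, Q_{b_i}} + \norm{t}$ holds at every index $i$, and since the set of paired walks $W$ depends only on the (translation-invariant) lengths of $P$ and $Q$, taking $\max_i$ and then $\min_{(a,b) \in W}$ gives $\dF(P+t, Q) \le \dF(P, Q) + \norm{t}$.

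I do not expect a genuine obstacle here. The only step meriting care is the interchange of the additive constant $\norm{t}$ with the $\max$ and $\inf$ (resp.\ $\min$) operators, which is valid precisely because $\norm{t}$ is independent of the quantity being optimized over. I would also remark in passing that applying the very same bound to the translation $-t$ yields the matching lower estimate $\df(P, Q) - \norm{t} \le \df(P+t, Q)$, so that translating by $t$ perturbs either \Frechet distance by at most $\norm{t}$.
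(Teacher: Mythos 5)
Your proof is correct, and in fact the paper offers no proof of its own here---Lemma~\ref{lem:translate} is simply imported from \cite{Wenk02}---so your triangle-inequality argument (translation preserves the set of admissible reparameterizations/couplings, bound pointwise, then pass the constant $\norm{t}$ through the $\max$ and the $\inf$/$\min$) is exactly the standard proof of this fact. Nothing is missing; your closing remark that the symmetric bound gives $\abs{\df(P+t,Q) - \df(P,Q)} \le \norm{t}$ is also correct and is precisely the Lipschitz-continuity observation underlying Lemma~\ref{lem:continuous}.
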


\begin{observation}
    Let $P$ and $Q$ be curves.  Then $\norm{P_0, Q_0} \le \df(P, Q) \le \dF(P, Q)$.
\end{observation}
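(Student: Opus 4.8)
The plan is to prove the two inequalities separately, since each is essentially immediate from a definition already recalled in the Preliminaries. The right-hand inequality $\df(P, Q) \le \dF(P, Q)$ is in fact already noted in the Preliminaries and is standard: every coupling sequence witnessing the discrete Fréchet distance induces a particular monotone reparameterization pair $(\sigma, \tau)$ that advances through the vertices in lockstep with the walk, so the continuous infimum over \emph{all} reparameterizations is at most the discrete minimum over couplings. I would simply cite this fact, or for self-containedness spell out the walk-to-reparameterization correspondence.

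The left-hand inequality $\norm{P_0, Q_0} \le \df(P, Q)$ is the only part needing a real argument, and it turns on the behavior of the reparameterizations at $t = 0$. The key observation is that any continuous, non-decreasing, surjective $\sigma : [0,1] \to [0,1]$ satisfies $\sigma(0) = 0$: surjectivity places $0$ in the image, and monotonicity forces the minimum of the image to be attained at the left endpoint. The same holds for $\tau$, so both reparameterizations fix $0$. Hence for every admissible pair $(\sigma, \tau)$,
$$\max_{t \in [0,1]} \norm{P(\sigma(t)), Q(\tau(t))} \ge \norm{P(\sigma(0)), Q(\tau(0))} = \norm{P_0, Q_0},$$
and since this bound is uniform in $(\sigma, \tau)$ it survives the passage to the infimum, yielding $\df(P, Q) \ge \norm{P_0, Q_0}$.

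I do not anticipate a genuine obstacle here; the single point requiring care is the endpoint-fixing property of surjective monotone reparameterizations, which is precisely what licenses evaluating the maximum at $t = 0$ to extract the starting-point distance. Chaining the two inequalities then gives $\norm{P_0, Q_0} \le \df(P, Q) \le \dF(P, Q)$, completing the observation.
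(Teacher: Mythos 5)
Your proposal is correct. The paper states this as an observation with no proof at all (the inequality $\df(P,Q) \le \dF(P,Q)$ is simply noted in the Preliminaries, and the starting-point bound is treated as self-evident), and your argument---that surjectivity plus monotonicity force $\sigma(0) = \tau(0) = 0$, so every reparameterization pair pays at least $\norm{P_0, Q_0}$ at $t=0$, a bound that survives the infimum---is exactly the reasoning the authors implicitly rely on.
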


This suggests the strategy of translating $P$ such that its start point overlaps with the start point of the optimal curve $Q$.  In fact, this strategy guarantees a \Frechet distance from $Q$ at most twice the optimal.  To show this, let $t_\opt$ be the optimal translation of $P$ and let $t_\apx = Q_0 - P_0$ be the translation that, when applied to $P$, lines up the two initial points of the curve.  By the observation above, we have that $\norm{t_\apx - t_\opt} = \norm{P_0 + t_\opt, P_0 + t_\apx}  = \norm{P_0 + t_\opt, Q_0} \le \df(P + t_\opt, Q)$.  By Lemma \ref{lem:translate}, this shows that $\df(P + t_\apx, Q)  = \df(P + t_\mathrm{opt} + (t_\apx - t_\opt), Q) \le 2\df(P + t_\opt, Q)$.  Finally, we can improve this 2-approximation to a $1+\eps$ approximation by centering an $\eps$-lattice of width 2 around $Q_0$ and trying every lattice point.

Of course, all this assumes that the optimal curve $Q$ is known, which it is not.  However, its start point must be one of the $k$ input points.  Trying each point in $S$ adds another factor of $k$, leading to a final worst case running time of $O(nk^3\log(nk)/\eps^{d})$ for the continuous versions and $O(nk^2/\eps^{d})$ for the discrete.  We note that, in practice, this additional factor of $k$ will typically have a very small associated constant, since those translations that do not also put the end point of $P$ close to a point in $S$ can quickly be ruled out.

\begin{theorem}
	The tCPSM can be $(1+\eps)$-approximated in $O(nk^3\log(nk)/\eps^{d})$ time for the Continuous Subset version and $O(nk^2/\eps^{d})$ for the Discrete Subset and Discrete All-points versions.
\end{theorem}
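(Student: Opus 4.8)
The plan is to reduce the translational matching problem to a bounded, discretized search over translation vectors, anchored by a cheap 2-approximation that can be computed without knowing the optimal curve. First I would observe that the start vertex $Q_0$ of the (unknown) optimal curve must be one of the $k$ points of $S$, so I can afford to iterate over all $k$ choices and treat each as a candidate for $Q_0$. For the correct choice the analysis below applies, and incorrect choices can only report a larger distance, so taking the overall minimum over all candidates is safe.

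For a fixed (correct) candidate start point, the key is the start-point alignment bound already derived above: setting $t_\apx = Q_0 - P_0$ gives $\df(P + t_\apx, Q) \le 2\,\df(P + t_\opt, Q)$, where the crucial inequality $\norm{t_\apx - t_\opt} \le \df(P + t_\opt, Q)$ follows from the Observation together with the triangle-inequality bound of Lemma~\ref{lem:translate}. This not only yields a 2-approximation but also localizes the optimal translation: writing $\eps_\opt := \df(P+t_\opt,Q)$, the vector $t_\opt$ lies within distance $\eps_\opt$ of $t_\apx$, and $\eps_\opt$ is at most the \emph{computable} value $\delta := \df(P+t_\apx, Q)$ returned by the underlying CPSM subroutine, with $\eps_\opt \le \delta \le 2\eps_\opt$.

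Next I would refine this to a $(1+\eps)$-approximation with a lattice. Since $t_\opt$ lies in the ball of radius $\delta$ about $t_\apx$, I place a regular lattice of spacing proportional to $\eps\,\delta$ over that ball (a region of width $2\delta$, i.e.\ ``width $2$'' once lengths are normalized by $\delta$). This guarantees a lattice point $t'$ with $\norm{t' - t_\opt} = O(\eps\,\delta) = O(\eps\,\eps_\opt)$, so by Lemma~\ref{lem:translate} we get $\df(P + t', Q) \le \eps_\opt + O(\eps\,\eps_\opt) = (1+O(\eps))\,\eps_\opt$, and rescaling $\eps$ gives the stated factor. Because both the radius and the spacing scale with $\delta$, the lattice has only $O(1/\eps^d)$ points, independent of the actual magnitudes. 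This is the step I expect to be the main obstacle: the argument must avoid any dependence on the unknown $\eps_\opt$ (hence the use of the computable proxy $\delta$ for the lattice scale) and must hold uniformly for both the continuous and discrete distances, which is exactly why it matters that Lemma~\ref{lem:translate} is stated for both.

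Finally I would account for the running time by multiplying three factors: the $k$ candidate start points, the $O(1/\eps^d)$ lattice translations per candidate, and the cost of computing the optimal matching distance for each fixed translation. For the Continuous Subset version the subroutine is Maheshwari's optimization at $O(nk^2\log(nk))$, giving $O(nk^3\log(nk)/\eps^{d})$; for the Discrete Subset and All-points versions the discrete CPSM runs in $O(nk)$ for both decision and optimization, giving $O(nk^2/\eps^{d})$. Reporting the curve and translation attaining the minimum distance over all candidates and lattice points then completes the argument.
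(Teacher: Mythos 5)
Your proposal is correct and follows essentially the same route as the paper: anchor the translation by aligning $P_0$ with each of the $k$ candidate start points, use the Observation together with Lemma~\ref{lem:translate} to get the 2-approximation and to localize $t_\opt$, refine with an $O(1/\eps^d)$-point lattice, and multiply by the cost of the CPSM subroutine (Maheshwari's for the continuous case, the $O(nk)$ discrete algorithm otherwise). Your use of the computable proxy $\delta$ with $\eps_\opt \le \delta \le 2\eps_\opt$ to set the lattice scale is just a careful spelling-out of what the paper leaves implicit in its ``$\eps$-lattice of width 2'' phrasing, not a different argument.
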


In addition, the same idea can be applied to the 3-approximation algorithm for the Continuous All-points CPSM discussed in Section \ref{sec:3approx} to yield a $(3+\eps)$ approximation for the corresponding tCPSM version.

\begin{corollary}
	The Continuous All-points tCPSM can be $(3+\eps)$-approximated in  $O(nk^3\log(nk)/\eps^{d})$.
\end{corollary}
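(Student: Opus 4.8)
The plan is to fuse the start-point-alignment lattice used for the $(1+\eps)$-approximation of the tCPSM with the NS-compliant $3$-approximation of Lemma \ref{lem:approx}, letting the two error sources compose multiplicatively. Let $Q_\opt$ be the optimal All-points curve, let $t_\opt$ be the optimal translation of $P$, and write $\eps_\opt = \df(P + t_\opt, Q_\opt)$. As in the $(1+\eps)$ argument, the start vertex $(Q_\opt)_0$ must be one of the $k$ input points of $S$, so I would iterate over all $k$ candidate start points $s \in S$ and, for each, set $t_\apx = s - P_0$.

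The core routine is a parametric search over a distance guess $\delta$. At each step, for every candidate $s$, I would center an $\eps$-lattice of width $2\delta$ about $t_\apx$ — giving $O(1/\eps^{d})$ points of spacing $\Theta(\eps\delta)$ — and for each lattice point $t$ invoke the $O(nk^2)$ decision procedure of Section \ref{sec:3approx} to test whether there is an NS-compliant curve whose \Frechet distance from $P + t$ is at most $3(1+\eps)\delta$. The correctness hinges on the composition of the two factors. When $\delta \ge \eps_\opt$, the Observation gives $\norm{t_\apx - t_\opt} = \norm{P_0 + t_\opt, (Q_\opt)_0} \le \eps_\opt \le \delta$, so $t_\opt$ lies inside the width-$2\delta$ box and some lattice point $t$ satisfies $\norm{t - t_\opt} = O(\eps\delta)$. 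By Lemma \ref{lem:translate}, $\df(P + t, Q_\opt) \le \eps_\opt + \norm{t - t_\opt} \le (1+\eps)\delta$, and Lemma \ref{lem:approx} then supplies an NS-compliant curve within $3(1+\eps)\delta$ of $P + t$. Hence the decision procedure succeeds for every $\delta \ge \eps_\opt$; the parametric search drives $\delta$ down to a threshold $\delta^* \le \eps_\opt$, returning a valid All-points solution with \Frechet distance at most $3(1+\eps)\delta^* \le (3+3\eps)\eps_\opt$. Replacing $\eps$ by $\eps/3$ throughout yields the stated $(3+\eps)$ factor while changing the running time by only a constant for fixed $d$.

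For the running time, each of the $k$ start points contributes $O(1/\eps^{d})$ lattice points, each tested in $O(nk^2)$ time, and parametric search adds an $O(\log(nk))$ factor, for a total of $O(nk^3\log(nk)/\eps^{d})$, as claimed. The main obstacle is not the arithmetic but justifying the parametric search in the presence of a translation-dependent NS-compliance condition: as $P$ is translated across the lattice, the segment of $P+t$ closest to a given point of $S$ can change, so the family of NS-compliant curves is not stable. The resolution is precisely the one already used for the preceding Continuous All-points TCPSM result — at each sampled translation we test for an NS-compliant curve at the relaxed scale $3(1+\eps)\delta$ rather than demanding compliance of the true optimum. Because Lemma \ref{lem:approx} produces a compliant witness at this scale regardless of how the closest-segment assignment falls out, the decision test remains monotone in $\delta$ and the search is valid.
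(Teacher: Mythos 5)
Your proposal is correct and follows essentially the same route as the paper, which obtains this corollary by plugging the $O(nk^2)$ NS-compliant decision procedure of Section \ref{sec:3approx} into the start-point/$\eps$-lattice translation scheme of the $(1+\eps)$-approximation, with the same relaxed-threshold test ($3(1+\eps)\delta$ rather than $\delta$) to keep the parametric search valid despite translation-dependent NS-compliance. Your write-up is in fact more explicit than the paper's one-line justification, including the $\eps \to \eps/3$ rescaling that turns the composed factor $3(1+\eps)$ into the stated $3+\eps$.
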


\section{Conclusion}
In summary, we have shown how various versions of the TCPSM can be solved in polynomial time.  The Continuous Subset version can be solved exactly and the Continuous All-Points version can be 3-approximated, both in $O((nk)^{2d+1}k\log(nk)$ time.  Both Discrete versions can be solved in $O((nk)^{d+1}\log(nk)$ time, and for the special case of translations in $\Real^2$, $O(n^2k^2\log^2(nk))$ time.  Furthermore, a $(1+\eps)$ approximation algorithm exists for translations in $\Real^d$ that runs in $O(nk^3\log(nk))$ time.

\bibliographystyle{plain}
\bibliography{frechet}

\end{document}